\setlist[enumerate]{leftmargin=*}
\newtheorem{definition}{Definition}
\newtheorem{theorem}{Theorem}
\newtheorem{lemma}{Lemma}
\newtheorem{proposition}{Proposition}
\newtheorem{example}{Example}
\theoremstyle{definition}
\DeclareRobustCommand*\cal{\@fontswitch\relax\mathcal}
\newcommand{\submissionversion}[1]{}
\newcommandx*{\triplearrow}[4][1=0, 2=2.5]{
	\draw[line width=1pt,double distance=5,#3] #4;
	\draw[line width=1pt,double distance=1pt,shorten <=#1,shorten >=#2,#3] #4;
}
\title{Convergence of Opinion Diffusion is PSPACE-complete}
\author{\Large \textbf{Dmitry Chistikov\textsuperscript{\rm 1} \and Grzegorz Lisowski\textsuperscript{\rm 1} \and Mike Paterson\textsuperscript{\rm 1} \and Paolo Turrini\textsuperscript{\rm 1}}\\ 
	\textsuperscript{\rm 1}Department of Computer Science, University of Warwick, United Kingdom \\
\{d.chistikov, grzegorz.lisowski, m.s.paterson, p.turrini\}@warwick.ac.uk
}
\pgfplotsset{compat=1.12}	 
\begin{document}

	\maketitle
	
	\begin{abstract}
		
		We analyse opinion diffusion in social networks, where a finite set of individuals is connected in a directed graph and each simultaneously changes their opinion to that of the majority of their influencers. We study the algorithmic properties of the fixed-point behaviour of such networks, showing that the problem of establishing whether individuals converge to stable opinions is PSPACE-complete.
		
	\end{abstract}

	\section{Introduction}
	
	Social networks are a well established paradigm for the computational analysis of real-world phenomena such as disease spreading  \cite{AIDS,jackson2010social}, product adoption \cite{AptM14,AptMS16} and opinion diffusion \cite{axelrod97,GrandiLP15}. They are typically modelled as directed graphs over a finite set of individuals possessing certain properties, such as opinions, which spread through the network according to predefined rules. For instance, protocols for the spread can be defined as a function of the influencers' opinions.

	In the plethora of social network models, threshold-based ones are certainly the best known. There, agents adopt an opinion if and only if it is shared by a given threshold of the incoming connections. While these models have a long standing tradition in the social sciences, originating from \citeauthor{grano78} (\citeyear{grano78}), they have received revived attention in artificial intelligence, {including contributions of \citeauthor{ferraioli2016decentralized} (\citeyear{ferraioli2016decentralized}), \citeauthor{auletta2017robustness} (\citeyear{auletta2017robustness}) or \citeauthor{bilo2018opinion} (\citeyear{bilo2018opinion}).} Notably, \citeauthor{auletta2018reasoning} have received the IJCAI 2018 Distinguished Paper Award for a study of communication in threshold-based social network models. 
	
	One of the major challenges associated with these models is that convergence of the diffusion protocol is not guaranteed.
	Imagine you would like to have your agents make a collective decision and let them discuss first, agreeing that they would cast their vote once they have made up their mind.
	Depending on the chosen diffusion protocol and the initial distribution of opinions, the process might never terminate. This is the case for synchronous threshold models.
	Clearly, any network will converge for {\em some} initial input, for instance when your agents already think the same to start with. 
	However this is not true in general.
	
	The typical path taken to circumvent the issue is to restrict the analysis to networks that  always converge, as studied by \citeauthor{GrandiLP15} (\citeyear{GrandiLP15}), \citeauthor{bredereck2017manipulating} (\citeyear{bredereck2017manipulating}) and \citeauthor{BotanEtAlCOMSOC2018} (\citeyear{BotanEtAlCOMSOC2018}). Another is to consider specific protocols which guarantee termination, as done for instance by \citeauthor{auletta2018reasoning} (\citeyear{auletta2018reasoning}): they propose an opinion-revision protocol for agents who disagree with a distinguished opinion.
	
	Recently, \citeauthor{christoff2017stability} (\citeyear{christoff2017stability}) have provided a characterisation of networks in which termination of the threshold-based opinion diffusion protocol is guaranteed. However, we still do not know whether characterising convergent networks is of any advantage for their algorithmic analysis,
	in other words, whether we can have a characterisation that is easier to check than actually running the protocol until converging or looping in some way.
	Here, we settle this problem.

	\paragraph{Our contribution.} 
	We study the convergence of opinion diffusion in social networks, modelled as directed graphs over a finite set of individuals, who simultaneously update their opinions. They switch their opinions if and only if the majority  of their influencers  disagrees with them. 
	We look at labelled networks, where individuals start with a binary opinion, and study the problem of whether that network converges. We also look at unlabelled networks and consider the problem of whether a labelling exists for which the network does not converge --- this problem concerns the \emph{structural} aspect of opinion diffusion's convergence. 
	Our contribution is two-fold: firstly, we present some classes of networks which are guaranteed to converge, and secondly we show that the problem of establishing whether a network converges  is PSPACE-complete even for the simplest of such protocols, closing
	a gap in the literature. {In fact, we show that any characterisation of such networks, including the one provided by  \citeauthor{christoff2017stability} (\citeyear{christoff2017stability}) cannot result in an efficient procedure for verifying the convergence of the considered protocol (unless {P=PSPACE}).}  
	
	We emphasize that even though our protocol is relatively simple, the computational complexity lower bounds that we obtain extend directly to more general models. For instance, the PSPACE-hardness of the considered problems lifts to the scenario in which each agent has its own specific update threshold. So our result implies that no complete characterisation of convergent networks can be efficiently computed in practice for a wide range of plausible diffusion protocols.
	
	\paragraph{Related literature}

	Our results have implications for various lines of research using opinion diffusion models.
	
	\begin{description}
		
		\item[\small Social Influence Models] The graph-like structure of social networks has attracted interest in computer science, with  studies of the influence weight of nodes in the network \cite{KempeKT05} and the properties of the influence function \cite{grabisch10}. Social influence has been widely analysed in the social sciences, from the point of view of strategic behaviour \cite{isbell58} and its implications for consensus creation \cite{degroot74} and cultural evolution \cite{axelrod97}.

		\item[\small Opinion Manipulation Models] Issues of convergence are extremely relevant to models that deal with opinion manipulation. For instance, Bredereck and Elkind (\citeyear{bredereck2017manipulating}) study a scenario where an external agent wishes to transform the opinion of a number of members of a network to induce desired fixed-point conditions. Further, control of collective decision-making  \cite{FaliszewskiR16} is an important topic in algorithmic mechanism design: the difficulty of establishing whether manipulation is a real threat is paramount for system security purposes.

		\item[\small Deliberative Democracy and Social Choice] Opinion diffusion underpins recent models of deliberative democracy, in terms of delegation  \cite{list}, representation \cite{EndrissG14}, and stability  \cite{christoff2017stability}. Formal models of democratic representation build on an underlying consensus-reaching protocol  \cite{degroot74,Brill18}.  Social networks have also become of major interest to social choice theory, with propositional opinion diffusion \cite{GrandiLP15} emerging as a framework for social choice on social networks \cite{grandi2017social}.

	\end{description}
	
	\paragraph*{Related computational models.}
	
	If the social networks are modelled as undirected, rather than directed, graphs, it has long been known that convergence takes at most a polynomial number of steps under majority updates \cite{ChaccFP85}. In these models,  PSPACE-hardness results have only been shown for more powerful {\em block sequential} update rules \cite{GolesMST16}.
	
	Convergence is a PSPACE-complete property in various related models, notably directed discrete Hopfield networks \cite{Orponen93} and
	Boolean dynamical systems
	(see, e.g., \citeauthor{BarrettHMRRS03} \citeyear{BarrettHMRRS03} and \citeyear{BarrettHMRRST07}).
	Hardness in these results (and their strengthenings, as studied by \citeauthor{OgiharaU17}  (\citeyear{OgiharaU17}), \citeauthor{RosenkrantzMRS18} (\citeyear{RosenkrantzMRS18}) and \citeauthor{KawachiOU19} (\citeyear{KawachiOU19}))
		crucially depends on the availability of functions
		that \emph{identify} $0$ and~$1$ (see the discussion of the ingredients for the hardness proofs later on).
		Opinion diffusion is instead based on self-dual functions, where flipping all inputs to a self-dual function always leads to flipping its output. In other words, in the setting we consider the diffusion protocol is symmetric with respect to opinions held by agents.

		Whilst \citeauthor{Kosub08} (\citeyear{Kosub08}) shows the NP-completeness of deciding the existence
		of a fixed-point configuration if \emph{all} self-dual functions
		are available, our update rule, in comparison, is monotone (i.e., has no negation).
		Moreover, sparse graphs of bounded fan-in --- with each agent having up to six influencers --- suffice for our proof of PSPACE-hardness.
		In the related model of cellular automata,
		known results show that majority is ``arguably the most interesting" local update rule \cite{Tosic17}.

		\paragraph{Paper structure} We first present our basic setup and examples of networks whose convergence is easy to check. Subsequently we prove that determining convergence is PSPACE-complete. Finally, we conclude by discussing the ramifications of our results and future research directions.

		\section{Opinion Diffusion} \label{sec:preliminaries}
		\paragraph{Social Networks.}	
		Let $N=\{1,2,\ldots, n\}$ be a finite set of agents and $E$ be a simple directed graph over $N$, i.e., an irreflexive relation over the set of agents. We call a tuple $(N,E)$ a {\em social network}.
		The idea is that each agent is influenced by the incoming edges and influences the outgoing ones.
		For each $i\in N$ we define the set $E[i]=\{j \mid (i,j) \in E\}$, 
		i.e., the set of agents that $i$ influences. Similarly, we define the set  $E^{-1}[i]=\{j \mid (j,i) \in E\}$, the {\em influencers of}~$i$.
		
		We are interested in how opinions spread in a social network following the influence relation. For this we equip agents with opinions, giving {\em labelled social networks}.

		\begin{definition}[Labelled Social Network]
			A \emph{labelled social network} is a tuple $\textit{SN} = (N, E, f)$, where:
			
			\begin{itemize}
				\item $(N, E) $ is a social network,
				\item $f: N \rightarrow \{0,1 \} $ is a binary labelling of each node.
			\end{itemize}
		\end{definition}
		
		
		
		Figure \ref{ExConv} gives examples of an unlabelled and a labelled social network.

		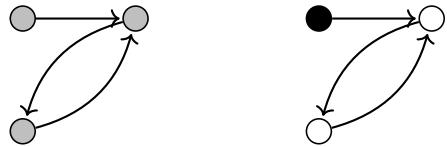
\begin{figure}[t]
			\centering
			
			\begin{tikzpicture}
			[->,shorten >=1pt,auto,node distance=1.5cm,
			semithick]
			\node[shape=circle,draw=black, fill=lightgray] (A) {};
			\node[shape=circle,draw=black, fill=lightgray] (B) [ right of= A] {};
			\node[shape=circle,draw=black, fill=lightgray] (C) [below of = A] { };

			\draw [thick,->]  (A) to(B) ;
			\draw [thick,->, bend right] (B) to [bend right] (C) ;
			\draw [thick,->] (C) to [bend right] (B) ;
			\end{tikzpicture}\hspace{2cm}
			\begin{tikzpicture}
			[->,shorten >=1pt,auto,node distance=1.5cm,
			semithick]
			\node[shape=circle,draw=black, fill=black] (A) {};
			\node[shape=circle,draw=black] (B) [ right of= A] {};
			\node[shape=circle,draw=black] (C) [below of = A] { };

			\draw [thick,->]  (A) to(B) ;
			\draw [thick,->, bend right] (B) to [bend right] (C) ;
			\draw [thick,->] (C) to [bend right] (B) ;
			\end{tikzpicture}
			
			\caption{On the left side, an unlabelled social network. On the right side, one of its labellings. Throughout the paper nodes coloured in black correspond to labelling 1, white nodes to labelling 0 and grey nodes are unlabelled.}
			\label{ExConv}
			
		\end{figure}
		
		\paragraph{Opinion Diffusion Protocol.}

		We model opinion change as an update protocol on the network where each agent $i$ takes the opinion of their influencers, i.e., $E^{-1}[i]$, into account. 
		
		For a given labelled social network $(N,E,f)$, and an agent $i$ let us call $A(i) = \{j \in E^{-1}[i] \mid f(i)=f(j)\}$ the set of influencers who agree with $i$'s opinion, and $D(i) = E^{-1}[i] \setminus A(i)$ the ones who do not.
		
		We assume agents change their opinion if the fraction of their influencers disagreeing with them is (strictly) higher than a half. In particular, a node with $2k$ influencers always takes the opinion of the majority of itself and these influencers. 
		
		\begin{definition}[Opinion Change]
			Let $\textit{SN} = (N, E, f)$ be a labelled social network and $i \in N$ be an agent. Then the \emph{opinion diffusion step} is the function $\textit{OD}: N \rightarrow \{0,1\}$ such that 
			\[ \textit{OD}(\textit{SN}, i) = \begin{cases}  \textit{flip}(f(i))\,\,\,\,\,\, \hfill \mbox{  if }  |D(i)| >  |A(i)| \\  
			f(i) \hfill \mbox{ otherwise} \end{cases} \]
			where $\textit{flip}(k)=1-k$ denotes the change from an original opinion to its opposite value.
		\end{definition}

		We are now ready to define the protocol for the evolution of a labelled social network. Here we focus on the \emph{synchronous update}, in which all agents modify their opinions at the same time. 
		
		\begin{definition}[Synchronous Update]
			Let $\textit{SN} = ( N, E, f )$ be a labelled social network. Then, $\textit{SU}(\textit{SN})= ( N, E, f' )$ is a social network such that for any $i \in N$, $f'(i) = \textit{OD}(\textit{SN}, i)$.
		\end{definition}
		
		The synchronous update protocol is deterministic: given a labelled social network we can compute its state after any given number of synchronous updates. An \emph{update sequence} of a labelled social network \textit{SN} is the infinite sequence of states of \textit{SN} after successive synchronous updates.
		\begin{definition}[Update Sequence]
			Given a labelled social network $\textit{SN} = (N, E, f)$, the \textit{update sequence} generated by \textit{SN} is the sequence of labelled social networks $\textit{SN}_{\textit{us}} = (\textit{SN}_0, \textit{SN}_1\dots )$ such that \textit{SN}$_0$ = \textit{SN} and for every $n \in \mathbb{N}$, $\textit{SN}_{n+1} = \textit{SU}(\textit{SN}_n)$. 
		\end{definition}
		
		For a labelled social network $\textit{SN}$ and agent $i$ we denote by $f_k(i)$ the value given to agent $i$ at time $k$, i.e., at the $k$-th update step. 
		We call a social network $\textit{SN}$  {\em stable} if $\textit{SU}(\textit{SN})=\textit{SN}$. A social network is {\em convergent} if its update sequence contains a stable social network, i.e., if its update sequence reaches a fixed point, its {\em limit network}.
		
		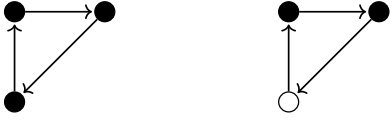
\begin{figure}[t]
			\begin{subfigure}{.200\textwidth}
				\centering
			\scalebox{0.8}{
					\begin{tikzpicture}
				[->,shorten >=1pt,auto,node distance=1.5cm,
				semithick]
				\node[shape=circle,draw=black, fill=black] (A) {};
				\node[shape=circle,draw=black, fill=black] (B) [ right of= A] {};
				\node[shape=circle,draw=black, fill=black] (C) [below of = A] {};

				\draw [thick,->] (A) -- (B) ;
				\draw [thick,->] (B) -- (C) ;
				\draw [thick,->] (C) -- (A) ;

				\end{tikzpicture}}
			\end{subfigure}
			\begin{subfigure}{.200\textwidth}
				
				\centering
				\scalebox{0.8}{
				\begin{tikzpicture}
				[->,shorten >=1pt,auto,node distance=1.5cm,
				semithick]
				\node[shape=circle,draw=black, fill=black] (A) {};
				\node[shape=circle,draw=black, fill=black] (B) [ right of= A] {};
				\node[shape=circle,draw=black] (C) [below of = A] {};

				\draw [thick,->] (A) -- (B) ;
				\draw [thick,->] (B) -- (C) ;
				\draw [thick,->] (C) -- (A) ;

				\end{tikzpicture}}
			\end{subfigure}
			\caption{On the left, a convergent (labelled) social network. On the right, a non-convergent one.}
			\label{FirstEx}
		\end{figure}

		\section{Graph Restrictions}\label{sec:graphs}

		Some networks converge for all initial labellings, while others converge for just some labellings. The lefthand network in Figure \ref{ExConv}, for example, converges for every labelling. However, the social networks displayed in Figure \ref{FirstEx} behave differently. Here we look at specific instances of social networks which converge for every labelling.
		
		
		Let us start with  DAGs, i.e., \emph{directed acyclic graphs}. 
		
		\begin{proposition}\label{acyclic}
			Let \textit{SN}=$(N,E)$ be a DAG. Then \textit{SN} converges in at most $k$ steps for every labelling $f$, where $k$ is the length of the longest path.
		\end{proposition}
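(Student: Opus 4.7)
The natural approach is induction on the depth of nodes, where $\mathrm{depth}(i)$ is defined as the length of the longest path in $E$ ending at~$i$ (so sources of the DAG have depth~$0$, and the maximum depth over all nodes equals~$k$, the length of the longest path). I would prove the stronger statement: for every node $i$, its label $f_t(i)$ is stable for all $t \geq \mathrm{depth}(i)$. Setting $t = k$ then shows $\textit{SN}_k$ is a fixed point.

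For the base case, a node $i$ of depth~$0$ has $E^{-1}[i] = \emptyset$, so $|D(i)| = |A(i)| = 0$, and the condition $|D(i)| > |A(i)|$ never holds: $i$ keeps its initial opinion forever. For the inductive step, assume every node of depth less than $d$ is stable from step $d-1$ onwards, and consider a node $i$ of depth~$d$. All influencers of $i$ lie at depth at most $d-1$, so their labels are fixed from step $d-1$ on. This means that the value of $|A(i)|$ (computed relative to whatever opinion $i$ currently holds) depends from step $d-1$ onwards only on $f(i)$ itself, with the ``agree'' and ``disagree'' sets swapping whenever $i$ flips.

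The crux is then the following one-flip argument. Look at $i$ at step $d-1$, with its influencers' labels frozen. Either $|D(i)| \le |A(i)|$, in which case $i$ does not flip at step $d$, and since nothing has changed we still have $|D(i)| \le |A(i)|$ at step $d$, so $i$ never flips again; or $|D(i)| > |A(i)|$, in which case $i$ flips at step~$d$, and after the flip the new $|A(i)|$ equals the old $|D(i)|$ and the new $|D(i)|$ equals the old $|A(i)|$, giving new $|A(i)| > |D(i)|$, so $i$ does not flip at step $d+1$ or later. In either case $i$ is stable from step $d$ onwards, completing the induction.

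I don't anticipate any real obstacles: the only subtle point is the self-dual nature of the update rule under which the roles of agreeing and disagreeing influencers swap after a flip, which is exactly what prevents oscillation once the influencers are frozen. Once this observation is in place, the induction on depth is essentially routine, and the bound $k$ follows immediately because $\max_i \mathrm{depth}(i)$ equals the length of the longest path in the DAG.
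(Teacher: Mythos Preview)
Your proposal is correct and follows essentially the same approach as the paper: both argue by induction on the length of the longest path ending at a node (the paper calls this $\textit{level}(i)$, you call it $\mathrm{depth}(i)$), with sources as the base case and the observation that all influencers of a depth-$d$ node have smaller depth for the step. Your treatment is in fact more careful than the paper's, which simply asserts that a node ``will stabilise within one step after all its influencers have stabilised'' without spelling out the one-flip argument you give.
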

		\begin{proof} 
			{
			Given a DAG \textit{SN}=$(N,E)$, consider an arbitrary labelled social network $\textit{SN}'=(N,E, f)$.  Let us write $i \rightarrow j$ for $j \in E[i]$. Since \textit{SN} is acyclic, for every $i \in N$ there is a path to $i$ from some source node of \textit{SN}. 
			Let $\textit{level}(i)$ be the length of the longest such path.
			We will show by induction on $\textit{level}(i)$ that every $f(i)$ will stabilise after at most $\textit{level}(i)$ updates. 
			}
			
			{
			If $\textit{level}(i)$ is 0 then $i$ is a source node and therefore never changes. 
			Suppose that all $i$ such that $\textit{level}(i)=r$ have stabilised after $r$ updates. Take any node $i$ with $\textit{level}(i)=r+1$. Since \textit{SN} is acyclic, for any $n'\in N$ such that  $n'\rightarrow i$, we have $\textit{level}(n') \leq r$. This means $n'$ is already stable after $r$ updates. Hence, $i$ will stabilise within one step after all its influencers have stabilised, i.e., after at most $r+1$ updates.   
			}
		\end{proof}

		Networks that are not DAGs do not always converge, as shown in Figure \ref{FirstEx}. But some of these have interesting properties with respect to convergence. For example cliques, i.e., networks $\textit{SN}=(N,E)$ with $E=N^2\setminus \{(i,i) \mid i \in N\}$. 
		
		\begin{proposition}\label{full} 
			Let \textit{SN}=$(N,E)$ be a clique. \textit{SN} converges for every labelling if and only if $|N|$ is odd. Moreover, if $\textit{SN}$ converges, then it does so after a single update step.
		\end{proposition}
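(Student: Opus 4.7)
My plan is to analyse the update in a clique as a purely counting argument, since every node has the same set of influencers (all other nodes). I would fix a labelling $f$ with $b$ ones and $a = |N|-b$ zeros. For a node $i$ with $f(i)=1$, the agreeing influencers are the other $b-1$ ones and the disagreeing influencers are all $a$ zeros, so $i$ flips iff $a > b-1$, i.e.\ $a \geq b$. Symmetrically, a node with $f(i)=0$ flips iff $b \geq a$. I would write this out as a short preliminary observation before splitting into cases.

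For the ``if'' direction ($|N|$ odd), I would note that $a \neq b$, so without loss of generality $a > b$; then by the two conditions above, every $1$-node flips and no $0$-node flips, so after one synchronous update the labelling is uniformly $0$, which is trivially stable. This also takes care of the ``moreover'' clause for odd $|N|$.

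For the ``only if'' direction ($|N|$ even), I would exhibit a non-convergent labelling: take $a = b = |N|/2$. Then the flip conditions $a \geq b$ and $b \geq a$ both hold, so every node flips. The update thus swaps the roles of the two classes, giving a labelling with $b$ zeros and $a$ ones, after which the same argument shows it swaps back. Hence the update sequence oscillates with period~$2$ and never reaches a fixed point. To complete the ``moreover'' statement for the even case, I would separately handle the labellings with $a \neq b$: by the same analysis as in the odd case (say $a > b$), everyone of the minority opinion flips in one step and the result is stable.

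I do not foresee a serious obstacle here; the whole argument is a short case split on the parities and on whether $a=b$. The only care needed is in reading off the strict inequality $|D(i)|>|A(i)|$ correctly (so that equality of agreeing and disagreeing influencers does \emph{not} cause a flip, whereas the conditions $a\geq b$ and $b\geq a$ above come from $|D|>|A|$ after accounting for $i$ itself being excluded from the influencers). I would make this indexing explicit at the start of the proof to avoid off-by-one errors.
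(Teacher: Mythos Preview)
Your proposal is correct and follows essentially the same approach as the paper: a counting argument showing that in the evenly split case (only possible for even $|N|$) every node flips forever, while in every other case one update sends all nodes to the initial majority opinion. Your write-up is simply a more explicit version of the paper's terse proof, including the careful handling of the off-by-one from excluding $i$ among its own influencers.
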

		\begin{proof}
			It is easy to check that if $\textit{SN}$ is evenly split (and therefore of even size) then every agent flips at each update step.  Otherwise, after one update every agent has the opinion of the initial majority.
		\end{proof}

		As checking whether a social network is a clique can be achieved by just counting its edges, the result above shows that for some structures establishing convergence is immediate.
		


		Consider now the  {\em strongly connected components} (SCCs) of a social network, i.e., subgraphs that have a path from each node to every other node and are maximal with respect to set inclusion.
		As is well-known (see e.g., \citeauthor{Bollobas1998} \citeyear{Bollobas1998}), each network $\textit{SN}=(N,E)$  can be partitioned into SCCs, yielding a DAG $\textit{SCC}_{\textit{SN}} = (\textit{SCCs}, E')$ where: (i) $\textit{SCC}s$ is the set of all SCCs of \textit{SN}; (ii) for any $\textit{SCC}_u, \textit{SCC}_v \in \textit{SCCs}$, $(\textit{SCC}_u, \textit{SCC}_v) \in E'$ iff for some $i \in SCC_u, j \in SCC_v$ we have that $j \in E[i]$.  Recall, that the set of SCCs of $\textit{SN}$ can be computed in linear time in the size of $\textit{SN}$.
		
		One might expect that if we knew that each SCC always converges then so would the whole network, or, put otherwise, that every network that always converges will also do so when only influenced by a network that itself always converges. Remarkably, this is not true even for very simple cases, as exemplified in Figure~\ref{fig:SCCs}.

		We now move on to the problem of checking convergence in an arbitrary social network.

              \begin{figure}[t]
			\centering
			\scalebox{0.8}{
			\begin{tikzpicture}
			[->,shorten >=1pt,auto,node distance=1.5cm,
			semithick]
			\node[shape=circle,draw=black, fill=white] (A) {};
			\node[shape=circle,draw=black, fill=black] (B) [ right of= A] {};
			\node[shape=circle,draw=black, fill=black] (C) [right of = B] { };
			\node[shape=circle,draw=black, fill=black] (D) [above of = A] { };
			\node[shape=circle,draw=black, fill=white] (E) [above of = B] { };
			\node[shape=circle,draw=black, fill=white] (F) [above of = C] { };
			
			\node[draw,  dashed, fit=(A) (C)  (B) ](FIt1) {};
			\node[draw,  dashed, fit=(D) (E) (F)](FIt1) {};
						
			\draw [thick,->]  (A) to(B) ;
			\draw [thick,->]  (A) to [bend right](C) ;
			\draw [thick,->]  (B) to [bend right](A) ;
			\draw [thick,->] (B) to (C) ;
			\draw [thick,->] (C) to [bend right] (B) ;
			\draw [thick,->]  (D) to(E) ;
			\draw [thick,->]  (D) to [bend right](F) ;
			\draw [thick,->]  (E) to [bend right](D) ;
			\draw [thick,->] (E) to (F) ;
			\draw [thick,->] (F) to [bend right] (E) ;
			\draw [thick,->] (E) to  (B) ;
			\end{tikzpicture}}
			\caption{A labelled network that does not converge, whose two SCCs (marked by rectangles) 
			do converge for every initial labelling. The convergence of the SCC in the lower tier is influenced 
			by the incoming edge from the upper SCC. }
			\label{fig:SCCs}
		\end{figure}
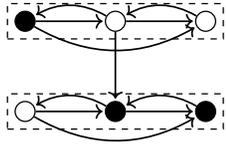

		\section{The Complexity of Checking Convergence} \label{sec:complexity}

		We consider two computational problems with respect to the protocol we are considering. The first of them is checking the convergence of a given labelled social network. 
		
		\begin{quote}
			\noindent \textsc{Convergence}:\\
			\hspace*{-1em} \indent\textit{Input:} Social network $\textit{SN}=( N, E)$ and labelling $f$.\\
			\hspace*{-1em}\textit{Output:} Does \textit{SN} converge from $f$?
		\end{quote}
		
		The second is checking for an unlabelled network whether there is a labelling which \emph{does not} converge.
		\begin{quote}
			\textsc{Convergence guarantee}:\\
			\hspace*{-1em}\textit{Input:} Social network $\textit{SN}=( N, E)$.\\
			\hspace*{-1em}\textit{Output:} Is there a labelling of \textit{SN} from which \textit{SN} does not converge? 
		\end{quote}
		
		
		In the remainder of this section we will prove theorems associated with these two computational problems. 
		\begin{theorem}
			\label{th:fixed-term}
			\textsc{Convergence} is PSPACE-complete.
		\end{theorem}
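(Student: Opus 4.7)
The argument splits into a PSPACE upper bound and a matching PSPACE-hardness reduction.

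\textbf{Membership.} A labelling of $N$ uses $|N|$ bits, so the configuration space has size $2^{|N|}$. Since $\textit{SU}$ is deterministic, the orbit of any initial labelling $f$ either reaches a fixed point within $2^{|N|}$ steps or enters a cycle of length at least two. A polynomial-space procedure iteratively applies $\textit{SU}$, storing only the current labelling and an $O(|N|)$-bit counter; it accepts as soon as a step outputs its own input and rejects after $2^{|N|}$ steps without such an event.

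\textbf{Hardness.} I would reduce from the acceptance problem of a polynomial-space Turing machine, constructing in polynomial time a labelled network $(N,E,f)$ such that the machine accepts its input iff the network converges from $f$. The core difficulty is that the majority update at each node is \emph{self-dual}: complementing every label complements every subsequent configuration, so the update alone cannot distinguish $0$ from $1$. I would circumvent this by exploiting source nodes: a node with no influencers has $|D(i)|=|A(i)|=0$ and therefore never flips, so its value in $f$ is preserved forever and serves as a hard-wired constant. With constants available, $\text{MAJ}(x,y,0)$ realises AND and $\text{MAJ}(x,y,1)$ realises OR at fan-in three.

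Since the update is also monotone, NOT cannot be produced by any local gadget. To recover negation I would move to a dual-rail encoding: each logical bit is represented by a complementary pair of nodes $(x,\bar x)$, the network is wired symmetrically in the two rails, and negation is simply a wire swap between them. On top of AND, OR, dual-rail NOT, and small feedback cycles acting as clocks and latches, I would assemble a synchronous circuit that implements one step of the Turing-machine transition function in $O(1)$ network updates, with the polynomial-length tape, head position, and state all stored in dual-rail nodes. The construction is arranged so that upon reaching the accepting state a clock oscillator is gated off and the whole configuration freezes into a fixed point, whereas on every non-accepting trajectory at least one pair of nodes is forced to keep flipping forever, precluding convergence.

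The main obstacle will be satisfying the combined constraints simultaneously: every gadget must be built out of purely self-dual majority rules; the dual-rail complementarity invariant must be preserved by each step of the simulation (since the self-dual dynamics cannot correct a symmetry-breaking transient); and, as announced in the introduction, the fan-in should stay bounded by six. Once these engineering details are checked, the bi-implication ``machine accepts iff the network converges'' yields PSPACE-hardness, matching the upper bound.
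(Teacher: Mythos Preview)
Your membership argument and the high-level ingredients for hardness---source nodes as constants, dual-rail encoding, NOT as a wire swap, and a Boolean-circuit simulation of a PSPACE machine---all match the paper's approach closely. The paper's \textsc{and} gadget is exactly your $\mathrm{MAJ}(x,y,0)$ using a fixed ``base pair'' of source nodes, and its successor-configuration circuit (Lemma~\ref{l:step}) plays the role of your transition-function block.

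Where your proposal remains genuinely incomplete is the mechanism that couples machine termination to network convergence. You write that on acceptance ``a clock oscillator is gated off and the whole configuration freezes,'' but in this model that step is the hard one: gating a self-sustaining oscillator by a dual-rail signal, and then ensuring that \emph{every} node---not just the clock---reaches a fixed point, is not automatic. Conversely, you must also guarantee that no accidental fixed point is reachable while the simulation is still running. The paper resolves both directions with a dedicated construction you do not have: an \emph{alarm} (an evenly split clique of even size, which flips forever while balanced) together with a \emph{fuse line} that monitors all dual pairs. When the machine halts, the circuit is designed so that some dual pair becomes \emph{invalid} (its two nodes agree); this invalidity propagates irreversibly down the fuse line, unbalances the alarm, and the alarm's many outgoing edges then force every node in the network to agree within a constant number of steps (Lemmas~\ref{l:fuse-step}--\ref{l:alarm}). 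While the machine runs, the alarm stays evenly split and keeps flipping, certifying non-convergence.

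Note also a conceptual inversion relative to your plan: you list ``preserving the dual-rail complementarity invariant'' as an obstacle, whereas the paper deliberately \emph{exploits} its violation---an invalid pair is precisely the halting signal that the fuse line detects. Your clock-and-latch architecture might be made to work, but as stated it leaves unresolved exactly the piece (global freeze on one side, guaranteed oscillation on the other) that carries the weight of the reduction.
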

		
		\begin{theorem}
			\label{th:quant-term}
			\textsc{Convergence guarantee} is PSPACE-complete.
		\end{theorem}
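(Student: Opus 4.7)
For membership in PSPACE, the plan is to guess and verify. Since the synchronous update is deterministic, any trajectory either reaches a stable labelling within $2^{|N|}$ steps or enters a cycle, by pigeonhole on the $2^{|N|}$ possible configurations. I would nondeterministically guess an initial labelling $f$, then certify non-convergence in NPSPACE: guess a step index $t \leq 2^{|N|}$ written in binary (polynomial space), simulate $t$ steps and store the configuration $f_t$, then simulate at most $2^{|N|}$ further steps, verifying along the way that no intermediate labelling is stable and that the trajectory eventually returns to $f_t$. Savitch's theorem then folds NPSPACE into PSPACE.

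For PSPACE-hardness, the natural plan is to reduce from the complement of \textsc{Convergence}, which is PSPACE-complete by Theorem~\ref{th:fixed-term}. Given $(\textit{SN}, f)$, I would build an unlabelled $\textit{SN}'$ such that some labelling of $\textit{SN}'$ is non-convergent exactly when $(\textit{SN}, f)$ does not converge. The template is: $\textit{SN}'$ contains $\textit{SN}$ as a subgraph together with an initialisation gadget whose job is to force the $\textit{SN}$-part of any non-convergent labelling of $\textit{SN}'$ to match $f$ up to global complementation. This global symmetry is unavoidable because the update rule is self-dual, so $(\textit{SN}, f)$ and $(\textit{SN}, \bar{f})$ always converge or diverge together.

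Concretely, I would introduce a source node $s$ (with no incoming edges, hence frozen at its initial label) together with a shallow DAG of copy nodes that, by Proposition~\ref{acyclic}, propagate $s$'s value within a constant number of steps. For each $v \in N$, the in-neighbourhood of $v$ in $\textit{SN}'$ is augmented with a dominating bundle of copy nodes (or of their complementary counterparts) so that after the initialisation phase, $v$'s label becomes $f(v) \oplus s$. If the gadget is designed carefully, every spurious labelling---where the copy nodes do not faithfully represent $s$ or its complement---is absorbed into a short, stable trajectory inside the gadget; the labellings that survive into a genuine simulation of $\textit{SN}$ are exactly those that project to $f$ or $\bar{f}$ on the $\textit{SN}$-part. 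Since a bounded prefix does not affect convergence, $\textit{SN}'$ has a non-convergent labelling iff $(\textit{SN}, f)$ is non-convergent.

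The main obstacle is implementing the complementing side of the gadget. Majority updates are monotone in the inputs, with the only non-monotone dependence being on the node's own current state through the self-dual flipping rule, so an ordinary NOT gate cannot be localised at a single vertex. Complementary copies of $s$ must therefore be realised relationally: pairs of nodes tied together by a small sub-gadget that forces them to hold opposite labels, collapsing to a stable configuration otherwise. Proving that this mechanism really does eliminate all spurious labellings while still allowing $f$ and $\bar{f}$ to drive a faithful simulation of $\textit{SN}$ is the technical crux, and will most likely require a layered analysis analogous to the DAG stabilisation argument of Proposition~\ref{acyclic}.
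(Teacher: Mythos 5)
Your membership argument is fine and is essentially the paper's (a labelling takes $|N|$ bits, one update step is polynomial-time, cycle detection within $2^{|N|}$ steps via a binary counter, and NPSPACE $=$ PSPACE).

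The hardness half has a genuine gap. Reducing from the complement of \textsc{Convergence} is legitimate in principle (PSPACE is closed under complement), but the initialisation gadget you propose cannot do its job, because influence in this model is static: if the bundle of frozen copy nodes dominates the in-neighbourhood of $v$, it forces $v$'s opinion to $f(v)\oplus s$ at \emph{every} step, not merely during an ``initialisation phase'', so the embedded copy of \textit{SN} is frozen and $\textit{SN}'$ converges from every labelling --- the reduction then answers wrongly whenever $(\textit{SN},f)$ is non-convergent. If instead the bundle is not dominating, nothing ties non-convergent labellings of $\textit{SN}'$ to the particular labelling $f$: the existential quantifier in \textsc{Convergence guarantee} lets the adversary label the \textit{SN}-part with any other non-convergent labelling $f'$ (arranging the gadget so it does not interfere), so $\textit{SN}'$ may have a non-convergent labelling even when $(\textit{SN},f)$ converges. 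Your sentence ``every spurious labelling is absorbed into a short, stable trajectory inside the gadget'' is exactly the crux, and no mechanism for it is given; with a monotone, self-dual rule and permanent edges it is unclear that any such local gadget exists. The paper sidesteps the problem rather than solving it: it does not attempt to force an initial labelling, but reduces from the PSPACE-complete problem ``given $n$ and a machine $M$, is there a configuration $s\in\{0,1\}^n$ from which $M$ diverges?'', reusing the network $\textit{MN}$ built for Theorem~\ref{th:fixed-term}. There the existential quantifier over labellings is matched by the existential quantifier over machine configurations: any labelling that avoids triggering the alarm must keep all dual pairs valid and the alarm evenly split, and then pipelines $h$ runs of $M$ from configurations $s_0,\ldots,s_{h-1}$, so non-convergence forces $M$ to diverge from some $s_i$, while conversely a divergent configuration yields a non-convergent labelling. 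That quantifier-matching idea is what your reduction is missing.
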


		
		It is important to note that the lower bounds apply to all opinion diffusion models for which our protocol is a special case. In particular, this holds in models with agent-dependent update thresholds or
		with weighted trust levels (i.e., with weighted majority instead of majority).
		
		Let us notice that both problems belong to PSPACE, because each labelling of a social network
		$\textit{SN}=(N,E)$ takes $|N|$ bits and the synchronous update mapping \textit{SU}
		can be evaluated in polynomial time.
		
		The hardness proof of Theorem~\ref{th:fixed-term} can be developed separately,
		but we choose to give a uniform presentation and derive hardness of both problems
		from the same construction, in order to make the proof of Theorem~\ref{th:quant-term}
		easier to follow.
		
		\subsection{Ingredients for the hardness proofs}\label{sec:comp:ing}
		
		The main technical challenge for the hardness proof is that the update mapping
		\textit{SU} applies a self-dual Boolean function (majority):
		if a node and all its influencers flip their opinion, then, after the update,
		the node will have the flipped value. This means, informally, that
		the nodes are indifferent to the identity of $0$ and $1$, which makes a direct simulation
		of propositional logic impossible.
		
		Our construction below is, in hindsight, reminiscent of the observation
		that the \emph{negation of the 3-input majority} is a basis for the class
		of all self-dual functions
		\cite{Post-book}; see, e.g., \cite[Theorem~3.2.3.2]{Lau-book}.
		Our proof, however, does not rely on any advanced topics in the theory of
		Boolean functions and their clones/closed classes.
		
		\paragraph*{Propositional logic and dual rail encoding.}
		
		{Let us introduce the basic technical notions appearing in the proofs of hardness of the considered problems.} We will use Boolean circuits from computational complexity theory. Due to space constraints we omit the detailed introduction of Boolean circuits, which can be found, e.g., in \cite[section 4.3]{PapadimitriouCC}.
		Signals in these circuits are Boolean values, true and false, and we will encode them
		in our social networks. We need to encode logical gates (\textsc{and} and \textsc{not}) and constant gates (\textsc{true} and \textsc{false}) too.
		
		 {We use the dual rail encoding due to the monotonicity of the opinion diffusion protocol. Indeed, in the current setting opinions reinforce themselves, so logical negation cannot be directly simulated.}
		 
		In the dual rail encoding, instead of considering individual nodes in a social network,
		we will be often considering related pairs of nodes, called \emph{dual pairs}.
		The two nodes in a dual pair are ordered.
		Given a labelling of the network,
		a dual pair is \emph{valid} if its two nodes disagree, i.e., take different values,
		and \emph{invalid} otherwise.
		Dual pairs will be building blocks in our construction,
		and our network will have a mechanism to ensure their validity. 
		
		Our first step is to build constant gates.
		We introduce a distinguished dual pair, the \emph{base pair};
		as long as it is valid, we assume without loss of generality that its
		two nodes have values $(1, 0)$.
		There is only one base pair in the network.
		Now for every valid dual pair in the network, we interpret $(1,0)$ as \emph{true} and $(0, 1)$ as \emph{false}.
		
		The next step is to build logical gates.
		All these gates in our circuits have fan-in $1$ or $2$, that is, each gate receives
		input from at most $2$ other gates. The gates are depicted in Figures 4 and 5 and described in Example 1.
		
		\begin{figure}[h]\label{AND}
			\centering
			\scalebox{0.8}{
			\begin{tikzpicture}
			[->,shorten >=1pt,auto,node distance=1.6cm,
			semithick]
			\node[shape=circle,draw=black, fill=lightgray] (A) {};
			\node[shape=circle,draw=black, fill=lightgray] (B) [ right of= A] {};
			\node[draw, ellipse, dashed, fit=(A) (B) ](FIt1) {};
			
			\node[shape=circle,draw=black, fill=lightgray] (C) [left of = A] {};
			\node[shape=circle,draw=black, fill=lightgray] (D) [left of = C] {};
			\node[draw, ellipse, dashed, fit=(C) (D) ](FIt1) {};

			\node[shape=circle,draw=black] (E) [below of = C] {};
			\node[shape=circle,draw=black, fill=black] (F) [below of = D] {};
			\node[draw, double, fit=(E) (F) ](FIt1)(C5)  {};
			
			\node[shape=circle,draw=black, fill=lightgray] (G) [below of = B] {};
			\node[shape=circle,draw=black, fill=lightgray] (H) [below of = A] {};
			\node[draw, ellipse, dashed, fit=(G) (H) ](FIt1) {};

			\draw [thick,->]  (A) to(H) ;
			\draw [thick,->]  (B) to(G) ;
			
			\draw [thick,->]  (D) to(H) ;
			\draw [thick,->]  (B) to(G) ;
			\draw [thick,->]  (C) to(G) ;
			
			\draw [thick,->]  (E) to(H) ;
			\draw [thick,->]  (F) to[bend right](G) ;
			
			\end{tikzpicture}}
			\caption{The \textsc{and} gadget.}
			\label{fig:AND}
		\end{figure}
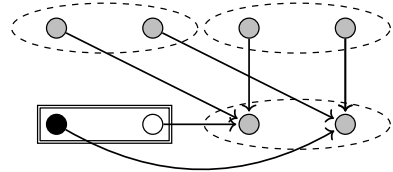

		\begin{figure}[h]\label{NOT}
			\centering
			\scalebox{0.8}{
			\begin{tikzpicture}
			[->,shorten >=1pt,auto,node distance=1.5cm,
			semithick]
			\node[shape=circle,draw=black, fill=lightgray] (A) {};
			\node[shape=circle,draw=black, fill=lightgray] (B) [ right of= A] {};
			\node[draw, ellipse, dashed, fit=(A) (B) ](FIt1) {};

			\node[shape=circle,draw=black, fill=lightgray] (G) [below of = B] {};
			\node[shape=circle,draw=black, fill=lightgray] (H) [below of = A] {};
			\node[draw, ellipse, dashed, fit=(G) (H) ](FIt1) {};

			\draw [thick,->]  (A) to(G) ;
			\draw [thick,->]  (B) to(H) ;

			\end{tikzpicture}}\hspace{1.5cm}
			\scalebox{0.8}{
			\begin{tikzpicture}
			[->,shorten >=1pt,auto,node distance=1.5cm,
			semithick]
			\node[shape=circle,draw=black, fill=lightgray] (A) {};
			\node[shape=circle,draw=black, fill=lightgray] (B) [ right of= A] {};
			\node[draw, ellipse, dashed, fit=(A) (B) ](FIt1) {};

			\node[shape=circle,draw=black, fill=lightgray] (G) [below of = B] {};
			\node[shape=circle,draw=black, fill=lightgray] (H) [below of = A] {};
			\node[draw, ellipse, dashed, fit=(G) (H) ](FIt1) {};

			\draw [thick,->]  (A) to(H) ;
			\draw [thick,->]  (B) to(G) ;

			\end{tikzpicture}}
			\caption{\textsc{not} gadget on the left,  \textsc{nop} gadget on the right.}
			\label{fig:NOT}
		\end{figure}
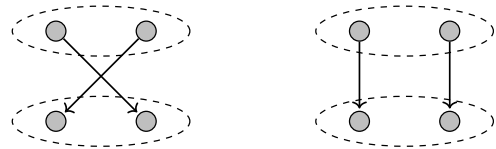

		\begin{example}
			\label{ex:gates}
			The gadget in Figure~\ref{fig:AND} models an \textsc{and} gate, and
			the gadget in Figure~\ref{fig:NOT} (left) models a \textsc{not} gate. 
			The \textsc{and} gadget relies on the base pair, which is depicted
			as a double rectangle.
			In more detail, if at time $t$ the input dual pairs (the two upper ovals) in the \textsc{and} gadget are valid,
			then at time $t+1$ the output dual pair is valid and represents the \textsc{and} of the two
			input values (and similarly for the \textsc{not} gate).
			Finally, the gadget in Figure~\ref{fig:NOT} (right) models a \textsc{nop} (no-operation) gate: at time $t+1$ the output pair is a copy of the input pair at time $t$.
		\end{example}
		
		\paragraph*{Turing machines.}
		
		Further in our reduction we will need to build Boolean circuits
		to simulate the behaviour of Turing machines.
		
		We will describe a restricted version of Turing machines that we
		use to prove Theorems~\ref{th:fixed-term} and~\ref{th:quant-term}.
		These Turing machines are \emph{polynomially space-bounded},
		or \emph{PSPACE machines} (referring to the complexity class);
		see, e.g., \cite[section 4.2]{AroraBarak} and \cite[chapter 19]{PapadimitriouCC} for a more detailed discussion.

		We will not need a formal definition of Turing machines in this paper
		and will instead rely on the following properties only:
		\begin{enumerate}
			\item Any Turing machine has a finite description.
			\item Any Turing machines can be \emph{run} on arbitrary input strings of arbitrary length $m \ge 0$
			over a fixed finite alphabet.
			\item At any point during a run, an instantaneous description of a Turing machine $M$
			(a \emph{configuration})
			can be encoded by a bit string of length $c \cdot m^d$,
			where the constants $c$ and $d$ depend only on the machine $M$.
			\item A Turing machine may either \emph{halt} at some point during the run, or
			\emph{diverge} (run forever).
			\item A run is a finite or infinite sequence of configurations;
			each configuration is either \emph{halting} or has a unique \emph{successor} configuration.
		\end{enumerate}
		%
		We will identify configurations of Turing machines with their encodings
		as $n$-bit strings (strings of truth values). Here $n = c \cdot m^d$;
		when $m$ is fixed, $n$ is the same in all possible configurations.
		
		For a given $n$, we will assume for the sake of simplicity
		that all $n$-bit strings represent valid configurations.
		This assumption does not invalidate our reduction
		and can in fact be eliminated using the technique of the following lemma.
		
		\begin{lemma}
			\label{l:step}
			Given a Turing machine $M$ and an integer $n \ge 1$,
			there exists an acyclic social network \textit{SN} with the following properties:
			\begin{itemize}
				\item \textit{SN} contains the base pair and has $2 n$ further sources and $2 n$ sinks, grouped into $n$ and $n$ dual pairs;
				\item every path from a source to a sink has the same length $h$, independent of $n$;
				\item \textit{SN} \emph{simulates} $M$: if at time $t$ the base pair and input dual pairs are valid
				and represent a configuration $s^{(0)} \in \{0, 1\}^n$,
				then at time $t+h$ if $s^{(0)}$ is non-halting the output dual pairs are valid and represent $s^{(1)}$, 
				the successor configuration of $s^{(0)}$;
				otherwise at least one output dual pair at time $t+h$ is invalid;
				\item \textit{SN} can be constructed in time polynomial in~$n$ and in the description of $M$.
			\end{itemize}
		\end{lemma}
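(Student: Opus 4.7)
The plan is to implement the one-step transition function of the Turing machine $M$ as a Boolean circuit and then realize that circuit as a social network using the gadgets of Example~\ref{ex:gates}. Because Turing-machine transitions are local---each bit of the successor configuration depends only on a constant-sized window (cell, its neighbours, head/state indicator) of the current configuration---the transition function can be expressed by a Boolean circuit $C_M$ of constant depth and polynomial size, with depth depending only on $M$ and not on $n$. Detection of halting is the same kind of local check (does the head presently sit on a halting state?) and can be folded into $C_M$ with the same depth bound.

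Given $C_M$, I would replace each \textsc{and} gate by an \textsc{and} gadget (Figure~\ref{fig:AND}) and each \textsc{not} gate by a \textsc{not} gadget (Figure~\ref{fig:NOT}, left), wiring output dual pairs to input positions of downstream gadgets with fan-out implemented via multiple outgoing edges from the pair's two nodes. Insertion of \textsc{nop} gadgets (Figure~\ref{fig:NOT}, right) on the shorter paths forces every source-to-sink path to have a common length $h$ set by the depth of $C_M$ times the per-gate height. The base pair sits at depth $0$; since each \textsc{and} gadget needs an input from a valid base pair one layer above, the base pair's value is forwarded through dedicated \textsc{nop} chains down to every layer containing an \textsc{and} gadget. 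Fresh nodes are introduced layer by layer, so the resulting network is acyclic.

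Encoding the halting case so that at least one output dual pair becomes invalid is the step I expect to require the most care. The idea is to compute a halting flag $h(s^{(0)})$ by a constant-depth subcircuit and to replace, for one designated output bit $b$, the usual dual-rail output $(b, \lnot b)$ by $(b \land \lnot h(s^{(0)}),\, \lnot b \land \lnot h(s^{(0)}))$. When $h(s^{(0)}) = 0$ this is $(b, \lnot b)$, a valid pair representing the corresponding bit of $s^{(1)}$; when $h(s^{(0)}) = 1$ both nodes take value $0$, so the pair is invalid at time $t + h$, as required. This tweak fits inside the existing gadget library and therefore does not increase the depth.

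All required properties then follow cleanly: acyclicity from the DAG structure of $C_M$ together with the \textsc{nop} padding; the $2n$ source and $2n$ sink nodes from the $n$ input and $n$ output dual pairs; polynomial constructibility from the polynomial size of $C_M$ and the constant size of each gadget; uniform source-to-sink length $h$ independent of $n$ from the constant depth of $C_M$; and correctness of the simulation from the gadget-by-gadget correctness noted in Example~\ref{ex:gates}.
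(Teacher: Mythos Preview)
Your approach is essentially the paper's: build a bounded-depth Boolean circuit for one step of $M$, realize each gate by the \textsc{and}/\textsc{not} gadgets of Example~\ref{ex:gates}, and pad with \textsc{nop} gadgets so that all source-to-sink paths have the same length. The paper's own proof is in fact terser than yours and does not spell out either the forwarding of the base pair through \textsc{nop} chains or the mechanism by which a halting configuration yields an invalid output pair, so your explicit treatment of both points is a welcome addition.

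There is one small slip in your halting mechanism, though. Detecting halting is, as you say, a local check at each tape cell (``is the head here and is the current state halting?''). But to apply the result to \emph{one designated} output bit you must \textsc{or} these per-cell flags over all $n$ cells into a single global flag $h(s^{(0)})$; with fan-in-two gates this \textsc{or} has depth $\Theta(\log n)$, so $h$ would no longer be independent of~$n$ as the lemma requires. The fix is immediate and keeps everything else intact: rather than routing a global flag to one designated output, invalidate the $i$th output pair using only the local flag at cell~$i$ (replace $(b_i,\lnot b_i)$ by $(b_i\wedge\lnot h_i,\ \lnot b_i\wedge\lnot h_i)$, where $h_i$ is the local ``head here and halting'' bit). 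On a halting configuration exactly the output pair under the head becomes $(0,0)$, and the circuit remains constant depth.
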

		\begin{proof}
			The assertion relies on the observation
			(following the lines of \cite[Theorem 6.6]{AroraBarak}, or \cite[section 8.2]{PapadimitriouCC})
			that
			for every polynomially space-bounded
			Turing machine $M$ and every integer $n$,
			there exists a Boolean circuit which:
			\begin{itemize}
				\item has $n$ inputs and $n$ outputs,
				\item has equal-length paths from inputs to outputs (where this length $h$ is independent of $n$),
				\item transforms an arbitrary non-halting configuration of $M$ into its successor configuration.
				\item can be constructed in time polynomial in~$n$ and in the description of $M$.
			\end{itemize}
			These properties map into the assertions of the lemma, using dual pairs
			as nodes in the circuit, and \textsc{and} and \textsc{not} gadgets from Example~1 as gates.
			To make the network satisfy the second assertion of the lemma, we extend it using
			\textsc{nop} gadgets where necessary.
		\end{proof}

		\paragraph*{Fuse line, valve, and alarm.}
		We will need a mechanism to check initial validity of dual pairs in our construction,
		as well as to detect the halting of a Turing machine, following Lemma~\ref{l:step}.
		If a dual pair is or becomes invalid,
		this will force the convergence of the social network.

		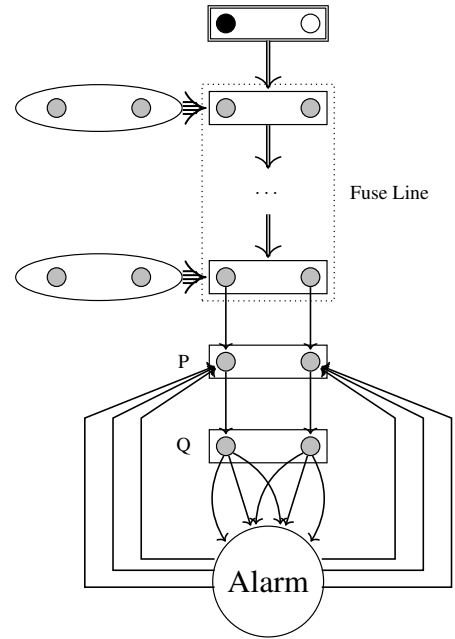
\begin{figure}[t]
			\centering
			\scalebox{0.75}{
			\begin{tikzpicture}
			[->,shorten >=1pt,auto,node distance=1.5cm,
			semithick]
			\node[shape=circle,draw=black, fill=lightgray] (A) {};
			\node[shape=circle,draw=black, fill=lightgray] (B) [ right of= A] {};
			\node[draw, fit=(A) (B) ](FIt1)(M) {};

			\node[left of=M](LA) {Q};
			
			
			\node[shape=circle,draw=black, fill=lightgray] (E) [above of = A]{};
			\node[shape=circle,draw=black, fill=lightgray] (F) [ above of = B] {};
			\node[draw, fit=(E) (F) ](FIt1)(O) {};
			\draw [thick, ->]  (E) to(A) ;
			\draw [thick, ->]  (F) to(B) ;
			\node[left of=O](LB) {P};

			\node[shape=circle,draw=black, fill=lightgray] (G1) [above of = E]{};
			\node[shape=circle,draw=black, fill=lightgray] (G2) [ right of = G1] {};
			\node[draw, fit=(G1) (G2) ](FIt1)(G3) {};
			\draw [thick, ->]  (G1) to(E) ;
			\draw [thick, ->]  (G2) to(F) ;

			\node[shape=circle,draw=white] (Y) [above of = G3] {$\dots$};
			\draw [thick,double, ->]  (Y) to(G3) ;

			\node[shape=circle,draw=black, fill=lightgray] (B1) [left of = G1]{};
			\node[shape=circle,draw=black, fill=lightgray] (B2) [left of= B1] {};
			\node[draw, ellipse, fit= (B1) (B2) ](FIt1)(C1) {};
			\triplearrow{arrows={-Implies}}{(C1) -- (G3)}

			\node[shape=circle,draw=black, fill=lightgray] (D1) [above of = Y, above of=G1]{};
			\node[shape=circle,draw=black, fill=lightgray] (D2) [ right of= D1] {};
			\node[draw, fit=(D1) (D2) ](FIt1)(D3) {};
			\draw [thick,double, ->]  (D3) to(Y) ;

			\node[shape=circle,draw=black, fill=lightgray] (E1) [left of = D1]{};
			\node[shape=circle,draw=black, fill=lightgray] (E2) [left of= E1] {};
			\node[draw, ellipse, fit= (E1) (E2) ](FIt1)(E3) {};
			\triplearrow{arrows={-Implies}}{(E3) -- (D3)}

			\node[shape=circle,draw=black, fill=black] (A5) [ above of=D1]{};
			\node[shape=circle,draw=black] (A6) [ right of= A5] {};
			\node[draw, double, fit=(A5) (A6) ](FIt1)(C5)  {};
			\draw [thick,double, ->]  (C5) to(D3) ;

			\node[shape=circle,draw=black, scale=1.6] (X) [below of = M] {Alarm};
			
			\node[draw, dotted, fit=(D3) (Y) (G3) ](FIt1)(FUSE) {};
			\node[shape=circle, ] (LABELFUSE) [right of = FUSE] {\textcolor{white}{fafascdsc}Fuse Line};
			
			\draw [thick, ->]  (A) to[bend left] (X) ;
			\draw [thick, ->]  (A) to [bend right](X) ;
			\draw [thick,->]  (B) to(X) ;
			\draw [thick,->]  (B) to [bend left](X) ;
			\draw [thick,->]  (B) to [bend right](X) ;
			\draw [thick, ->]  (A) to (X) ;

			
			\draw [thick, ->]  (-0.2,-2.5) to (-2.5,-2.5) to (-2.5, 0.5) to (E);
			\draw [thick, ->]  (-0.2,-2.2) to (-2,-2.2) to (-2, 0.5) to (E);
			\draw [thick, ->]  (-0.1999,-2) to (-1.5,-2) to (-1.5, 0.5) to (E);

			\draw [thick, ->]  (1.7,-2.5) to (4, -2.5) to (4, .5) to (F);
			\draw [thick, ->]  (1.7,-2.2) to (3.5, -2.2) to (3.5, .5) to (F);
			\draw [thick, ->]  (1.7,-2.) to (3, -2.) to (3, .5) to (F);

			\end{tikzpicture}}
			\caption{The fuse line.}
			\label{fig:fuse}
		\end{figure}

		\begin{figure}[ht]
			\centering
			\scalebox{0.8}{
			\begin{tikzpicture}
			[->,shorten >=1pt,auto,node distance=1.1cm,
			semithick]
			\node[shape=circle,draw=black, fill=lightgray] (A) {};
			\node[shape=circle,draw=black, fill=lightgray] (B) [ right of= A] {};
			\node[draw, dashed, fit=(A) (B) ](FIt1) {};

			\node[shape=circle,draw=black, fill=lightgray] (G) [below of = B] {};
			\node[shape=circle,draw=black, fill=lightgray] (H) [below of = A] {};
			\node[draw, dashed, fit=(G) (H) ](FIt1) {};

			\draw [thick,->]  (A) to(G) ;
			\draw [thick,->]  (A) to(H) ;
			\draw [thick,->]  (B) to(H) ;
			\draw [thick,->]  (B) to(G) ;

			\end{tikzpicture}
			\hspace{2em}
			\begin{tikzpicture}
			[->,shorten >=1pt,auto,node distance=1.1cm,
			semithick]
			\node[shape=circle,draw=black, fill=lightgray] (A) {};
			\node[shape=circle,draw=black, fill=lightgray] (B) [ right of= A] {};
			\node[draw, fit=(A) (B) ](FIt1)(M) {};

			\node[shape=circle,draw=black, fill=lightgray] (G) [below of = B] {};
			\node[shape=circle,draw=black, fill=lightgray] (H) [below of = A] {};
			\node[draw, fit=(G) (H) ](FIt1) (N){};

			\draw [thick, double, ->]  (M) to(N) ;

			\end{tikzpicture}}
			\caption{Two pairs in the fuse line, one feeding into the other.\\
				Left: in detail. Right: simplified drawing (corresponding to connections between pairs in the fuse line as depicted in Figure~\ref{fig:fuse}), abbreviating the connections in the left picture. }
			\label{fig:fuse-feed}
		\end{figure}
		
		\begin{figure}[ht]
			\centering
			
			\begin{subfigure}[t]{0.4\linewidth}
			\scalebox{0.8}{\begin{tikzpicture}
			[->,shorten >=1pt,auto,node distance=1.1cm,
			semithick]
			\node[shape=circle,draw=black, fill=lightgray] (A) {};
			\node[shape=circle,draw=black, fill=lightgray] (B) [ right of= A] {};
			\node[draw, ellipse, dashed, fit=(A) (B) ](FIt1) (P){};
			
			\node[shape=circle,draw=black, fill=lightgray] (I) [below of= A, left of=A] {};
			\node[shape=circle,draw=black, fill=lightgray] (J) [ right of= I] {};
			\node[shape=circle,draw=black, fill=lightgray] (K) [right of =J]{};
			\node[shape=circle,draw=black, fill=lightgray] (L) [ right of= K] {};

			\node[shape=circle,draw=black, fill=lightgray] (G) [below of = J] {};
			\node[shape=circle,draw=black, fill=lightgray] (H) [below of = K] {};
			\node[draw, dashed, fit=(G) (H) ](FIt1) (N){};

			\draw [thick,->]  (A) to(I) ;
			\draw [thick,->]  (A) to(J) ;
			\draw [thick,->]  (B) to(K) ;
			\draw [thick,->]  (B) to(L) ;
			
			\draw [thick,->]  (I) to(G) ;
			\draw [thick,->]  (I) to(H) ;
			
			\draw [thick,->]  (J) to(G) ;
			\draw [thick,->]  (J) to(H) ;
			
			\draw [thick,->]  (K) to(G) ;
			\draw [thick,->]  (K) to(H) ;
			
			\draw [thick,->]  (L) to(G) ;
			\draw [thick,->]  (L) to(H) ;
			
			\end{tikzpicture}}
			\end{subfigure}
			~
		\begin{subfigure}[t]{0.4\linewidth}
			\scalebox{0.8}{
			\begin{tikzpicture}
			[->,shorten >=1pt,auto,node distance=1.5cm,
			semithick]
			\node[shape=circle,draw=black, fill=lightgray] (A) {};
			\node[shape=circle,draw=black, fill=lightgray] (B) [ right of= A] {};
			\node[draw, ellipse, fit=(A) (B) ](FIt1)(M) {};

			\node[shape=circle,draw=black, fill=lightgray] (G) [below of = B] {};
			\node[shape=circle,draw=black, fill=lightgray] (H) [below of = A] {};
			\node[draw, fit=(G) (H) ](FIt1) (N){};

			\triplearrow{arrows={-Implies}}{(M) -- (N)}

			\end{tikzpicture}}
			\end{subfigure}
			\caption{Dual pair connected to pair from the fuse line.\\
				Left: in detail. Note how the influence of the input pair on the output pair is stronger than in Figure~\ref{fig:fuse-feed}. Right: simplified drawing (used in Figure~\ref{fig:fuse}), abbreviating the connections in the left picture.}
			\label{fig:fuse-connect}
		\end{figure}
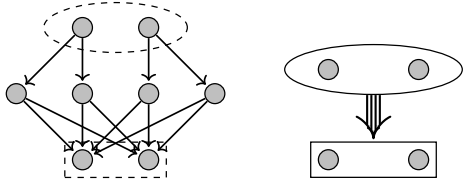
		
		The mechanism consists of a \emph{fuse line} (sequence of pairs of nodes)
		leading to a \emph{valve} and \emph{alarm} (an even clique), as shown in Figure~\ref{fig:fuse}.

		Let us first discuss the fuse line itself.
		Pairs of nodes in the fuse line are depicted by rectangles. 
		Each pair in the fuse line (except for the last) feeds into the succeeding pair as shown in Figure~\ref{fig:fuse-feed}.
		In addition, all other dual pairs in the entire network (depicted for the sake of clarity as ovals) will also connect to distinct pairs in the fuse line
		as shown in Figure~\ref{fig:fuse-connect}.
		We will not think of the pairs in the fuse line as dual pairs.
		
		At the end of the fuse line
		shown in Figure~\ref{fig:fuse},
		the big circle is a clique of $2 k$ nodes (an \emph{alarm}), $k \ge 2$,
		and the \emph{valve} mechanism is formed by the two rectangles (pairs) $P$, $Q$, and the alarm.
		Both nodes of pair $Q$ have edges to each node in the alarm, and
		all nodes in the alarm have edges to both nodes of pair $P$. In the following analysis, we say that the alarm is
		\emph{evenly split} if exactly $k$ of its nodes are labelled $0$.
		We say that the alarm \emph{goes off} at time $t$
		if all of its nodes agree at this time
		(we will usually imply that this was not the case at time $t-1$).

		We show now several properties of this network which will be crucial for the PSPACE-hardness reduction.
		\begin{lemma}
			\label{l:fuse-step}
			If at time $t \ge 1$ a pair in the fuse line is invalid,
			then:
			\begin{enumerate} [label=\textup{(\alph*)}]
				\item\label{l:fuse-step:remain} it remains invalid forever and
				\item\label{l:fuse-step:successor} the succeeding pair is invalid from time $t+1$ on.
			\end{enumerate}
		\end{lemma}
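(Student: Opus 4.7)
The plan rests on one structural observation that makes both parts fall out with modest arithmetic: for every pair $(G,H)$ in the fuse line, the two nodes have identical multisets of incoming edges. This is visible from Figures~\ref{fig:fuse-feed} and~\ref{fig:fuse-connect}: the preceding fuse pair sends edges to both $G$ and $H$, and each attached dual pair routes its influence through four intermediaries, every one of which targets both $G$ and $H$.

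For part~\ref{l:fuse-step:remain}, I would use this symmetry directly. If $G(t) = H(t)$, then the two nodes see exactly the same multiset of values among their influencers, so $|D(G)| = |D(H)|$ and $|A(G)| = |A(H)|$; therefore either both flip or both remain, which preserves equality at time $t+1$. The conclusion then follows by induction on time.

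For part~\ref{l:fuse-step:successor}, the complication is that at time $t$ the succeeding fuse pair might itself be valid, so the identical-influencer observation alone is not enough; one must show that after the next step the two nodes necessarily agree. I would count the numbers $k_0$ and $k_1$ of $0$- and $1$-valued influencers of $G$ at time $t$. The preceding pair, being invalid and valued $v$, contributes exactly two edges of value $v$, shifting $k_0 - k_1$ by $\pm 2$. For each attached dual pair, the assumption $t \ge 1$ guarantees that the four intermediaries reflect the dual pair's values at time $t-1$, so the first two intermediaries carry equal values and the other two carry equal values. Hence each attached dual pair contributes $(4,0)$, $(2,2)$, or $(0,4)$ to $(k_0, k_1)$, i.e.\ a multiple of $4$ to $k_0 - k_1$. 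Summing, $k_0 - k_1 \equiv 2 \pmod{4}$; in particular $k_0 \ne k_1$. Both $G$ and $H$ therefore update to the majority value at time $t+1$, making the pair invalid, and part~\ref{l:fuse-step:remain} then propagates invalidity forever.

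The main obstacle is recognising that the correct certificate for part~\ref{l:fuse-step:successor} is the mismatch modulo~$4$ between the edge counts contributed by the preceding fuse pair (two edges per node) and by each attached dual pair (four edges per node). A naive ``dominance'' argument would fail here: an invalid attached dual pair actually feeds more edges into $(G,H)$ than the preceding pair does, so one cannot claim that the preceding pair overwhelms competing signals; the argument must instead ride on the parity obstruction created by the single odd contribution of the fuse-to-fuse link.
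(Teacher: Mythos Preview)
Your proof is correct and follows the same line as the paper's. Both parts rest on the observation that the two nodes of a fuse pair share identical influencer multisets; for part~\ref{l:fuse-step:successor}, your $k_0-k_1\equiv 2\pmod 4$ argument is exactly the paper's observation that the four intermediaries (which at any time $t\ge 1$ copy the attached dual pair's values from time $t-1$) can only be split $4$--$0$, $2$--$2$, or $0$--$4$, so together with the two agreeing predecessor values the six influencers can never be split $3$--$3$.
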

		
		\begin{proof}
			Assertion~\ref{l:fuse-step:remain} follows the fact that
			the two nodes in any single pair in the fuse line
			have the same set of influencers.
			\par
			In order for assertion~\ref{l:fuse-step:successor} to fail,
			the succeeding pair must be valid at times $t$ and $t+1$.
			Since, again, the two nodes in this succeeding pair
			have the same set of (six) influencers, this set should be
			evenly split at time~$t$. But this is impossible,
			because two of these influencers agree
			by the assumption of the lemma, and the remaining four
			cannot be split into $1$ and $3$ for every $t \ge 1$
			by the construction of the connection in Figure~\ref{fig:fuse-connect}.
		\end{proof}
		
		\begin{lemma}
			\label{l:stuck}
			If some dual pair in the network is invalid at some time,
			then the last pair in the fuse line becomes invalid at some time
			and remains invalid forever.
		\end{lemma}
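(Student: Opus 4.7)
The plan is to show that if some dual pair $D$ is invalid at a time $t_0$, then the corresponding fuse line pair $F$ (the one to which $D$ is attached via the gadget of Figure~\ref{fig:fuse-connect}) becomes invalid within two update steps; once that happens, Lemma~\ref{l:fuse-step}\ref{l:fuse-step:successor} takes over and pushes the invalidity down the fuse line to its last pair, where Lemma~\ref{l:fuse-step}\ref{l:fuse-step:remain} makes it permanent.

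First, I would look at the four intermediate nodes $I, J, K, L$ inside the gadget that attaches $D$ to its fuse line pair $F$: by the construction in Figure~\ref{fig:fuse-connect}, each of these nodes has a unique influencer, which is one of the two nodes of $D$ (two of them hang off the first node of $D$, the other two off the second). Since $D$ is invalid at time $t_0$, both of its nodes carry the same value, so after one update step all four of $I, J, K, L$ carry this common value; hence at time $t_0 + 1$ the four intermediate nodes agree.

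Next, I would count the influencers of $F$. Both nodes of $F$ share exactly the same set of six influencers: the two nodes of the preceding fuse line pair, together with $I, J, K, L$. At time $t_0 + 1$ at least four of these six agree (the four intermediates), so the set cannot be split $3$-$3$; consequently both nodes of $F$ take the same (majority) value at time $t_0 + 2$, meaning $F$ is invalid from time $t_0 + 2$ on. Since $t_0 + 2 \ge 1$, Lemma~\ref{l:fuse-step}\ref{l:fuse-step:successor} applies inductively: the pair after $F$ is invalid from time $t_0 + 3$, the one after that from time $t_0 + 4$, and so on. In particular the last pair of the fuse line becomes invalid at some time $t^\ast$, and by Lemma~\ref{l:fuse-step}\ref{l:fuse-step:remain} it remains invalid for every time $\ge t^\ast$.

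The only delicate point is the very first step, where I need that the four intermediate nodes actually copy the value of $D$'s nodes after a single update. This is exactly what the gadget in Figure~\ref{fig:fuse-connect} guarantees, because those intermediate nodes have been introduced solely for this connection and have no other incoming edges; everything after that is arithmetic on the sizes of influencer sets and a direct appeal to Lemma~\ref{l:fuse-step}.
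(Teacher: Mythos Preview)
Your argument is correct and follows the same route as the paper, which simply states that the lemma ``follows directly from Lemma~\ref{l:fuse-step}''; you have merely made explicit the step the paper leaves implicit, namely that an invalid dual pair forces all four intermediate nodes to agree after one update, so the attached fuse-line pair cannot have a $3$--$3$ split among its six influencers and becomes invalid, after which Lemma~\ref{l:fuse-step} propagates the invalidity to the last pair. One small omission: the base pair is a dual pair but is attached to the first fuse-line pair via the connection of Figure~\ref{fig:fuse-feed} rather than Figure~\ref{fig:fuse-connect}; the same counting argument still works there (the two base-pair nodes agree, and the four intermediate nodes cannot be split $1$--$3$ for $t\ge 1$), so this case is handled identically.
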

		
		\begin{proof}
			Follows directly from Lemma~\ref{l:fuse-step}.
		\end{proof}
		
		The final part of our construction of the network is that
		every node in the alarm has edges to every node in the network,
		except for the fuse line, nodes connecting dual pairs to pairs in the fuse line and pair $Q$ of the valve
		(in other words, to all dual pairs and to pair $P$).
		This includes the two nodes of the base pair (depicted, as previously,
		as a double rectangle).

		\begin{lemma}
			\label{l:alarm}
			Suppose at time $t$ at least one of the following conditions holds:
			\begin{enumerate}  [label=\textup{ (\alph*)}]
				\item\label{l:alarm:fuse} the last pair in the fuse line is invalid,
				\item\label{l:alarm:p} the two nodes of $P$ agree,
				\item\label{l:alarm:q} the two nodes of $Q$ agree, or
				\item\label{l:alarm:alarm} the alarm is not evenly split.
			\end{enumerate}
			Then by time $t+3$ the alarm goes off
			and by time $t+6$ all nodes in the network agree.
		\end{lemma}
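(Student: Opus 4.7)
The plan is to split the argument into two parts: first, to show that the alarm becomes uniform (``goes off'') by time $t+3$; second, to show that from a uniform alarm the common value propagates to every remaining node within three further updates, yielding the $t+6$ bound.

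For Part~1, I would begin from two structural observations about the valve. The two nodes of $P$ share exactly the same set of influencers (all $2k$ alarm nodes together with the two nodes of the last fuse-line pair), so they always perform identical updates; in particular, once they agree they agree forever. Likewise, the two nodes of $Q$ share the set $\{P_1,P_2\}$ as their only influencers, and a direct check on the majority rule for two inputs shows that if $P$ agrees at some time $s$ then $Q$ agrees at $s{+}1$ with the same value, while if $P$ is split at $s$ then $Q$'s values are unchanged from $s$ to $s{+}1$. I would then treat the four conditions (a)--(d) via a case analysis that feeds into the alarm dynamics below.

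The main obstacle is the detailed analysis of the alarm's own dynamics. Each alarm node has $2k+1$ influencers (an odd number, so no tied updates arise inside the alarm), and a direct flip-condition computation identifies three regimes. First, if the alarm is not evenly split at some time and $Q$ is split there, the alarm becomes uniform in a single step. Second, if $Q$ agrees with common value $w$ and the value-$1$ count $j$ is not in one of the ``delicate'' pairs $(j,w) = (k{-}1,1)$ or $(k{+}1,0)$, then again the alarm becomes uniform in one step. The third regime collects the delicate cases (evenly split alarm together with split $Q$, or the two $(j,w)$ pairs just mentioned); in these I would track the interleaved dynamics of $P$, $Q$, and the alarm for up to three further updates. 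The key handles are: a parity argument on $P$'s influence count $j+l$ with $l\in\{0,2\}$ under condition~(a), using Lemma~\ref{l:fuse-step} to keep the last fuse pair invalid; the fact that $Q$ becomes an agreeing copy of an agreeing $P$ with one step of delay; and the fact that once $Q$ agrees, two consecutive alarm updates --- with the alarm-driven common value of $P$ computed in between --- always drive the alarm to a uniform state. Collecting these subcases yields the alarm off by time $t+3$ under each of (a)--(d).

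For Part~2, let $s\leq t+3$ be the time at which the alarm first becomes uniform, with common value $v$. Choosing $k$ strictly larger than the non-alarm in-degree of any dual-pair or $P$-node (an assumption we may safely build into the construction), at time $s{+}1$ every dual-pair node, every node of $P$, and the base pair are dominated by their $2k$ agreeing alarm influencers and therefore take value $v$. At time $s{+}2$, both nodes of $Q$ adopt $v$ from the now-uniform $P$, and every intermediary node feeding the fuse line (each of which is simply a copy of a dual-pair node) becomes $v$. Finally, at time $s{+}3$, every pair in the fuse line becomes $v$ simultaneously: each such pair's four intermediary influencers are already $v$ and outweigh the two preceding-pair influencers, so the output is $v$ regardless of the predecessor's values. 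Hence every node of the network agrees by $s+3 \leq t+6$, completing the proof.
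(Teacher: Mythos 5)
Your overall plan coincides with the paper's: classify the one-step alarm regimes, isolate the delicate scenarios (evenly split alarm with $Q$ split; $k\pm1$ split with $Q$ agreeing with the minority), and then argue propagation after the alarm goes off. Your Part~2 is correct and essentially the paper's argument (the paper even gets away with $k=2$, since the non-alarm in-degree of dual-pair nodes is at most $3$), and your preliminary facts about $P$, $Q$ and the one-step regimes are right, modulo the loose phrase ``identical updates'' (a node's update also depends on its own value; the consequence you actually use, that $P$ stays in agreement once it agrees, is fine).

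The gap is in the delicate-case analysis, which is the heart of the lemma and which you only promise to carry out; moreover, the one concrete fact you offer as its basis is false as stated. It is not true that once $Q$ agrees, two consecutive alarm updates always make the alarm uniform: if at time $s$ the alarm has $k-1$ zeros and $k+1$ ones, $Q=(0,0)$ (siding with the minority) and $P=(1,1)$, then the alarm flips at $s+1$, $Q$ copies $P$ and becomes $(1,1)$, which again sides with the new minority, so the alarm flips once more at $s+2$ and only goes off at $s+3$ --- three alarm updates after $Q$ first agrees. This off-by-one matters for the accounting you need: under condition (c) the bound $t+3$ is exactly tight, and under (a) and (b), where $Q$ only starts agreeing at time $t+2$ or $t+1$, a bare ``$Q$ agrees $\Rightarrow$ uniform within three steps'' would only give $t+5$ or $t+4$. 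To recover $t+3$ one must exploit, as the paper does, that in these chains the values are pinned: when $Q$'s agreement is inherited from $P$ (or, under (a) with an evenly split alarm, when $P$ copies the invalid fuse pair --- your parity observation), the value $Q$ acquires is forced, and since a $k\pm1$ split returns to itself after two flips while $P$ is driven to the alarm majority, $Q$ cannot keep alternating with the minority; with an evenly split alarm, $Q$'s agreement finishes matters in a single step. Until this interleaving is actually traced for each of (a)--(d), the ``by time $t+3$'' claim is not established.
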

		
		\begin{proof}
			Suppose that at some time $t'$ the alarm is split into sets of size
			$m \le k$ and $2 k - m$.
			If $m \le k-2$, the influence of $Q$ on the alarm is negligible,
			so the alarm goes off at time $t'+1$.
			If $m = k$, then all nodes in the alarm will flip at time $t'+1$
			if the two nodes of $Q$ disagree; otherwise the alarm goes off.
			Finally, if $m = k-1$, then all nodes in the alarm will flip at time $t'+1$
			if the two nodes of $Q$ agree and side with the minority, otherwise
			the alarm goes off.
			
			We now show that, under the conditions of the lemma, the alarm
			will necessarily go off at some time $\le t+3$.
			%
			If this does \emph{not} happen, then, by the argument above,
			either (\textit{i}) the alarm remains evenly split (and the nodes
			in each of the pairs $P$ and $Q$ disagree),
			or (\textit{ii}) the alarm is split into sets of size $k-1$ and $k+1$,
			flipping on each step, and the nodes of $Q$ keep agreeing with each
			other and alternating
			between $(0,0)$ and $(1,1)$.
			
			Consider scenario~(\textit{i}).
			Note that cases~\ref{l:alarm:p} and~\ref{l:alarm:q} are incompatible
			with this scenario, because $Q$ copies $P$ and influences
			all nodes in the alarm.
			Case~\ref{l:alarm:alarm} is not compatible
			with this scenario either. So only case~\ref{l:alarm:fuse} remains.
			But since in this scenario
			the alarm is evenly split, the pair $P$ will simply copy the last pair
			of the fuse line, and at time $t+1$ we are essentially in case~\ref{l:alarm:p}.
			So, in scenario~(\textit{i}), the alarm will go off by time $t+3$.
			
			Let us now consider scenario~(\textit{ii}).
			Assume with no loss of generality that at time $t$
			the alarm has $k-1$ nodes labelled $0$ and $k+1$ nodes labelled $1$,
			and that $Q$ is labelled
			$(0, 0)$ at this time (siding with the minority).
			Note that the pair $Q$ keeps flipping; and
			since the alarm is split into sets of size $k-1$ and $k+1$,
			the nodes of $P$ simply copy the majority in the alarm.
			Therefore,
			the labellings must follow the following diagram:
			\begin{equation*}
			\begin{array}{lccc}
			&  t    &  t+1  &  t+2  \\
			P                        &       & (1,1) &       \\
			Q                        & (0,0) & (1,1) & (0,0) \\
			\text{$k-1$ in the alarm}&    0  &  1    &  0    \\
			\text{$k+1$ in the alarm}&    1  &  0    &  1    
			\end{array}
			\end{equation*}
			But this labelling of $Q$ at time $t+2$ is not possible,
			because the pair $Q$ simply copies the pair $P$.
			Therefore, $Q$ will remain at $(1,1)$ instead;
			and so at time $t+3$ the alarm will go off.
			
			It remains to prove that, in all cases,
			in at most $3$ steps from
			the alarm going off,
			all nodes in the network will agree.
			Since all $2 k$ nodes in the alarm influence all dual pairs in the network,
			and the indegree of each node in every dual pair is at most $3$
			(not counting the edges from the alarm), the influence of the alarm
			will prevail as long as $k \ge 2$,
			i.e., all dual pairs
			will become invalid and assume this value by time $t+4$.
			All pairs in the fuse line will follow by time $t+6$.
			At the same time, pairs $P$ and $Q$ of the valve will follow the alarm
			no later than at times $t+4$ and $t+5$, respectively.
			This completes the proof.
		\end{proof}

		\newcommand{\newextmathcommand}[2]{%
			\newcommand{#1}{\ensuremath{#2}\xspace}
		}
		
		\newextmathcommand{\MainNet}{\textit{MN}}
		\newextmathcommand{\mainLabelling}{f}
		\newextmathcommand{\auxLabelling}{a}
		\newextmathcommand{\someLabelling}{g}

		\paragraph*{Auxiliary labelling~$\auxLabelling(s)$.}
		
		Let \textit{SN} be the social network from Lemma~\ref{l:step}
		and $s \in \{0, 1\}^n$ a configuration.
		Recall that \textit{SN} is acyclic and all paths from source to sink in \textit{SN}
		have equal length, $h$; this means that the set of all nodes of \textit{SN} can
		be partitioned into $h+1$ layers $0, \ldots, h$, where layer $0$ is the source layer
		and layer $h$ the sink layer.
		Denote by $\textit{SN}'$ the social network obtained from \textit{SN}
		by removing the sink layer;
		we now define the labelling $\auxLabelling(s)$ of $\textit{SN}'$ as follows.
		Notice that every dual pair is contained in one layer.
		Consider first any labelling of $\textit{SN}'$ where the $n$ dual pairs in layer $0$ are assigned
		the values that represent $s$.
		The network \textit{SN} converges after $h-1$ updates by Proposition~\ref{acyclic}.
		We then pick as $\auxLabelling(s)$ the labelling of the limit network.

		\paragraph*{Construction of network \MainNet and labelling \mainLabelling.}
		
		We construct a social network from the components described above.
		Given a Turing machine $M$, we take the network \textit{SN} from Lemma~\ref{l:step}
		and combine it with the fuse line, valve, and alarm as follows:
		\begin{itemize}
			
			\item
			For each $i = 1, \ldots, n$,
			the $i^{th}$ source dual pair of \textit{SN} is identified with
			the $i^{th}$ sink   dual pair of \textit{SN}.
			(This transforms \textit{SN} into a cyclic network, where all cycles have length
			divisible by~$h$.)
			
			\item
			Every dual pair in \textit{SN} connects to a distinct pair in the fuse line.
			(As described above.)
			
			\item
			Every node in the alarm has edges to all dual pairs in \textit{SN},
			except for the fuse line and pair $Q$ of the valve (in other words, to all dual pairs and to pair $P$).
			(As described above.)
			
		\end{itemize}
		The fuse line needs as many pairs as there are
		dual pairs in \textit{SN},
		and $k$ can be chosen as $2$ (based on the proof of Lemma~\ref{l:alarm}).
		This completes the construction of the network \MainNet in our reductions.
		
		Given a configuration $s \in \{0, 1\}^n$ of the Turing machine $M$,
		consider any labelling of \MainNet that satisfies the following conditions:
		(\textit{i})   nodes in \textit{SN} are labelled according to the auxiliary labelling $\auxLabelling(s)$ defined above;
		(\textit{ii})  each pair in the fuse line and the valve is valid (i.e., its nodes disagree);
		(\textit{iii}) the alarm is evenly split; and
		(\textit{iv}) in every connection of the form shown in Figure~\ref{fig:fuse-connect},
		exactly $2$ out of $4$ intermediate nodes have value~$0$.
		Denote this labelling by \mainLabelling.
		
		\subsection{Hardness proofs}
		Let us proceed to proving the computational hardness of the problems for  
		Theorems~\ref{th:fixed-term} and~\ref{th:quant-term}.
		\paragraph*{Proof of Theorem~\ref{th:fixed-term}.}
		
		We already argued membership in PSPACE above and will prove hardness here.
		We rely on the fact that
		%
		there exists a {universal, polynomial-space bound} Turing machine $U$ for which the following problem is PSPACE-complete:
		\begin{quote}
			\emph{Input:}
			an integer $n \ge 1$ and a configuration $s^{(0)} \in \{0, 1\}^n$ of $U$.\\
			\emph{Output:}
			does $U$ diverge when started from configuration $s^{(0)}$?
		\end{quote}
		The complexity of this problem is shown similarly to Exercise~4.1 in~\cite{AroraBarak}.
		See also Theorem 19.9 in~\cite{PapadimitriouCC}.
		
		Apply the construction above to the Turing machine $U$. 
		Take the network \MainNet and the labelling $\mainLabelling$ defined above.
		First note that dual pairs in \textit{SN} have inputs from inside \textit{SN}
		and $2 k$ inputs from the alarm. This means that \textit{SN} will function
		``autonomously'' as long as the alarm remains evenly split.
		By Lemma~\ref{l:step}, \textit{SN} will in this case compute consecutive configurations
		of the Turing machine $U$.
		There is a ``pipelining'' effect involved: the labelling of the source
		level of \textit{SN} will be set to $s^{(0)}$ at time~$0$, then to $s^{(1)}$, the
		successor of $s^{(0)}$, at times~$1, \ldots, h$, then to $s^{(2)}$
		for the next $h$ steps, then to $s^{(3)}$, etc.
		
		Observe that
		if the Turing machine $U$ diverges when started from the configuration $s^{(0)}$,
		then, by the above, the alarm will always remain evenly split,
		flipping forever. This means that
		\MainNet does not converge.
		On the other hand, if $U$ terminates, than some dual pair will become invalid
		(Lemma~\ref{l:step}), the alarm will go off (Lemmas~\ref{l:stuck} and~\ref{l:alarm}),
		and the network will converge.
		Theorem~\ref{th:fixed-term} follows.
		
		\paragraph*{Proof of Theorem~\ref{th:quant-term}.}
		
		Again, we already argued membership in PSPACE above and will prove hardness here.
		We will now rely on PSPACE-completeness of
		the following problem: 
		\begin{quote}
			\emph{Input:}
			an integer $n \ge 1$ and (a description of) a Turing machine $M$.\\
			\emph{Output:}
			is there a configuration $s \in \{0, 1\}^n$ such that
			$M$ diverges when started from $s$?
		\end{quote}
		The hardness of this problem is a straightforward variation of the Corollary of Theorem~19.9 in~\cite{PapadimitriouCC}.
		
		The proof of Theorem~\ref{th:quant-term} extends the proof of Theorem~\ref{th:fixed-term}.
		Instead of $U$, we now have a Turing machine~$M$.
		Recall from the previous proof that
		if there is a configuration $s \in \{0, 1\}^n$ from which $M$ diverges,
		then there is an initial labelling from which \MainNet fails to converge.
		So we will now consider
		the case where $M$ terminates started from every configuration.
		Can there now be a labelling from which \MainNet fails to converge?
		
		To answer this question,
		let us look into various initial labellings of \MainNet.
		Let $\someLabelling$ such a labelling.
		By Lemma~\ref{l:alarm},
		if there exists a time $t \in \{0, 1, \ldots, h-1\}$ for which
		the network \MainNet has an invalid dual pair,
		then \MainNet converges.
		The same holds if \MainNet has an invalid pair
		in the fuse line or valve, or if the alarm is not evenly split.
		
		Suppose none of the above applies;
		then consider configurations $s_0, \ldots, s_{h-1} \in \{0,1\}^n$
		formed by the values of the source-layer dual pairs of \textit{SN}
		at times $0, 1, \ldots, h-1$.
		By the arguments above, the network \MainNet simulates the Turing machine~$M$
		in the following way.
		For each $i \in \{0, 1, \ldots, h-1\}$,
		at times $t \in \{ i, i + h, i + 2 h, \ldots \}$ the source-layer dual pairs of \textit{SN}
		form consecutive configurations of $M$ started from $s_i$.
		If $M$ terminates when started from some $s' \in \{s_0, \ldots, s_{h-1}\}$,
		then \MainNet converges when started from the labelling $\someLabelling$.
		This means that a necessary condition for \MainNet to fail to converge
		(starting from $\someLabelling$) is
		that
		$M$ diverges when started from every $s_i$, $i \in \{0, 1, \ldots, h-1\}$.
		In this case,
		there certainly \emph{exists}
		a configuration $s_i$ from which the Turing machine~$M$
		diverges.
		This completes the proof of Theorem~\ref{th:quant-term}.

		\section{Conclusions}

		We have shown that checking convergence of opinion diffusion in social networks is PSPACE-complete.
		Our results extend to majority-based multi-issue opinion diffusion \cite{GrandiLP15}, also in presence of integrity constraints \cite{BotanEtAlCOMSOC2018}, and to all {update rules that admit suitable modification of our gadgets, such as quota rules (in which an agent switches an opinion if a specified fraction of their influencers disagrees with them).}
		
		
		There are many possible directions for further research. 
		First, we have noted how some classes of networks, e.g., DAGs, are convergent and this can be verified efficiently. Our results imply that there is no efficiently computable characterisation of convergent networks, however we can ask whether a meaningful characterisation exists for networks that converge fast. 
		Second, an interesting question is whether the existence of a {\em non-trivial} (i.e., different from all-0 and all-1) fixed-point configuration in our model is an NP-complete property. 		
		Third, we have limited ourselves to the study of synchronous opinion diffusion protocols. This is possibly the simplest social network update model, widely adopted in the literature. It is also of interest what happens in asynchronous networks. We note that losing synchronicity makes the system nondeterministic, so the question of convergence changes significantly. We would for example need to study different forms of convergence, e.g., for all possible update orderings, for some, and the like.
		%
		Finally, our results are based on a worst-case complexity analysis and an important question remains regarding the complexity of verifying convergence in random networks.  
		
		\bibliography{literature1}
	\end{document}